\newtheorem{lemma}{Lemma}
\newtheorem{theorem}{Theorem}
\newtheorem{example}{Example}
\newcommand{\R}{{\mathbb R}}
\begin{document}

\title{Recombination and peak jumping}
\author{ Kristina Crona }

\email{kcrona@american.edu}

\maketitle

\begin{abstract}
We find an
advantage of
recombination
for a category of
complex fitness
landscapes.
Recent  studies  
of empirical fitness
landscapes
reveal complex gene
interactions
and multiple
peaks, and
recombination 
can be a powerful mechanism
for escaping suboptimal
peaks.
 However
classical work
on recombination largely
ignores the effect
of complex 
gene interactions.
The advantage 
we find  has no correspondence
for 2-locus systems 
or for smooth landscapes.
The effect is sometimes extreme, in the
sense that shutting off
recombination could result
in that the organism fails
to adapt. A standard
question about
recombination is
if the mechanism
tends to 
accelerate or 
decelerate 
adaptation.
However,
we argue that 
extreme effects
may be more important
than how the 
majority falls.
\end{abstract}


\section{introduction}


Throughout the paper, we will 
consider haploid biallelic $L$-loci populations.
Let $\Sigma=\{0,1\}$ and let $\Sigma^L$
denote bit strings of length $L$.
$\Sigma^L$ represents the genotype space.
In particular,
\[
\Sigma^2=\{ 00, 10, 01, 11 \}
\text{ and }
\Sigma^3=\{ 000, 100, 010, 001, 110, 101, 011, 111 \}.
\]

We define a fitness landscape as
a function $w:\Sigma^L\mapsto \mathbb{R}$,
which assigns a fitness value to
each genotype \citep{w}. The fitness
of the genotype $g$ is denoted $w_g$.
The metric we consider is the Hamming distance,
meaning that the distance between two genotypes equals
the number of positions where the genotypes differ. 
In particular, two genotypes are adjacent,
or mutational neighbors, if
they differ at exactly one position.
We will use fitness graphs \citep{cgb},
as a representation
of coarse aspects
of fitness landscapes. Roughly, 
the nodes represent genotypes
and each arrow points toward the more fit
genotype. A fitness landscapes is smooth
if it can be represented by a fitness graphs
where all arrows point up.

We are interested in the
effect of recombination for complex fitness
landscapes with multiple peaks.
For general background on recombination,
see e.g. \cite{ol}.
As conventional,
recombination is modeled
so that for a resulting genotype,
each locus is equally likely 
to agree with either parent's allele.

In general, recombination has no effect
for monomorphic populations.
Large population, or subdivided
population, are likely to be 
polymorphic, so that recombination
can produce new genotypes.
One potential advantage of
recombination is the Fisher-Muller effect.
The effect has been discussed
in recent work on complex fitness landscapes
\citep{nns}. Briefly,
 in the absence of recombination,
beneficial mutations may 
get lost due to clonal interference.
For instance, if two single mutants of high fitness co-exist in a population, 
then one of them may outcompete the other.
However, recombination can incorporate beneficial
mutations in the same genome, and thereby prevent
loss of genetic variation due to clonal interference.

\section{Results}
We will study the effect of
recombination 
for an adapting organism.
After a recent change in the environment,
the wild-type, denoted ${\bf{0}}= 0 \dots 0$
does no longer have maximal fitness.
We restrict to landscapes
where there is no path
to the global maximum in
the fitness graph,
so that an adapting
organism could be trapped 
at suboptimal peaks.
At the same time the global maximum
should be within reach, in the
sense that gene shuffling 
of [suboptimal] peaks  
can generate the global maximum.
Specifically, the
fitness landscapes should
satisfy the following 
conditions.

\smallskip
\noindent
{\bf{Main assumptions.}}
Let $g_{\max}$ denote the genotype 
of maximal fitness in $\Sigma^L$.
There exist  genotypes $ g^1, \dots, g^L \in \Sigma^L$,
not necessarily different such that:
\begin{itemize}

\item[(A1)]
There is no 
path from ${\bf{0}}$
to the $g_{\max}$
in the fitness graph.

\item[(A2)]
\[
g_{\text{max}}
=g^1_1 \dots g^L_L,
\]
where $g^k_i$ is the $i:$th bit of $g^k$,

\item[(A3)]
For each $k$,
there exists
a path in the fitness graph 
from ${\bf{0}}$ to $g^k$.

\item[(A4)]
Each $g^k$ is at a local peak.
\end{itemize}

Note that Conditions A1-A3
are important for
our conclusions,
whereas A4 could
be relaxed under
some circumstances.
From a biological perspective,
A1 means that an adapting population
could be trapped at a suboptimal peak,
A2  that the global maximum ($g_{\max}$)
can be obtained by 
a sequence of recombination
events using genotypes in
$\{g^k\}$. Conditions A3 and A4
imply that 
the genotypes
$g^k$  have a reasonable chance
to encounter each other
in large populations.

Landscapes satisfying A1-A4
are of interest, because of the potential
advantage of recombination.
Our first observation is that 
no landscape
satisfies A1-A4 in the 2-loci case.
Indeed, if A1 holds
then the genotype 11 is at a global maximum.
In addition, the single mutants 10, 01
are deleterious. 
In summary
\[
w_{10}, w_{01} < w_{00}   \quad \quad   w_{10}, w_{01} < w_{11}.
\]
Figure 1 shows the fitness graph determined by these conditions,
and clearly  A2-A4 are not satisfied.
However, A1-A4  are satisfied for 
Examples 1 and 2 (see Figures 2 and 3).

\begin{figure}
\begin{center}
\includegraphics[scale=0.5]{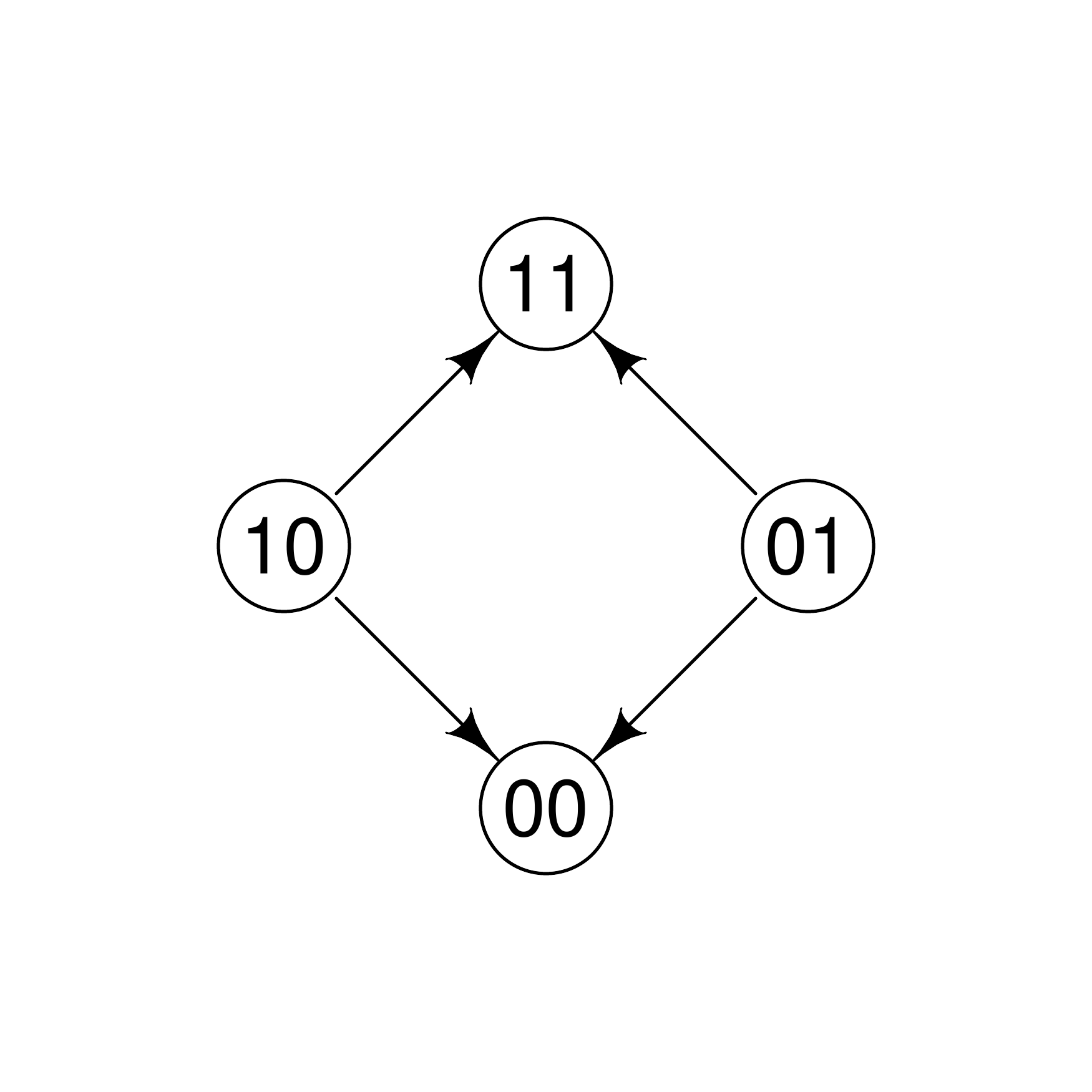}
\end{center}
\caption{The sole fitness graph for 2 loci satisfying condition A1}
\end{figure}

\bigskip
\begin{example}
Consider a system of genotypes
\[
000, 100, 010, 001, 110, 101, 011, 111
\]
The genotypes
$
100, 010, 001
$
have higher fitness than
the wild-type $000$.
The double mutants
$
110,
101,
011
$
have lower fitness
than the wild-type.
The genotype 111 
is at the global maximum.
\end{example}

\begin{example}
For $L=4$, consider
the following case.
The genotypes
1000, 0100, 0010, 0001
have higher fitness than
the wild-type $0000$.
The double mutants
1100,
0011
have higher fitness
than the single mutants.
The remaining double mutants
1010,
0101,
and
the triple mutants
1110,
1101,
1011,
0111
have low fitness.
In summary,
\[
w_{1000}, w_{0100}, w_{0010}, w_{0001}>w_{0000}
\]
\[
w_{1100}>w_{1000}, w_{0100},\quad w_{0011}>w_{0010}, w_{0001},
\]
\[
w_{1111}>w_{1100}, w_{0011},
\]
\[
w_{1001}, w_{1010}, w_{0110}, w_{0101}, w_{1110},  w_{1110} w_{1101} w_{1011} w_{0111}  <w_{0000}.
\]

\end{example}

\begin{figure}
\begin{center}
\includegraphics[scale=0.5]{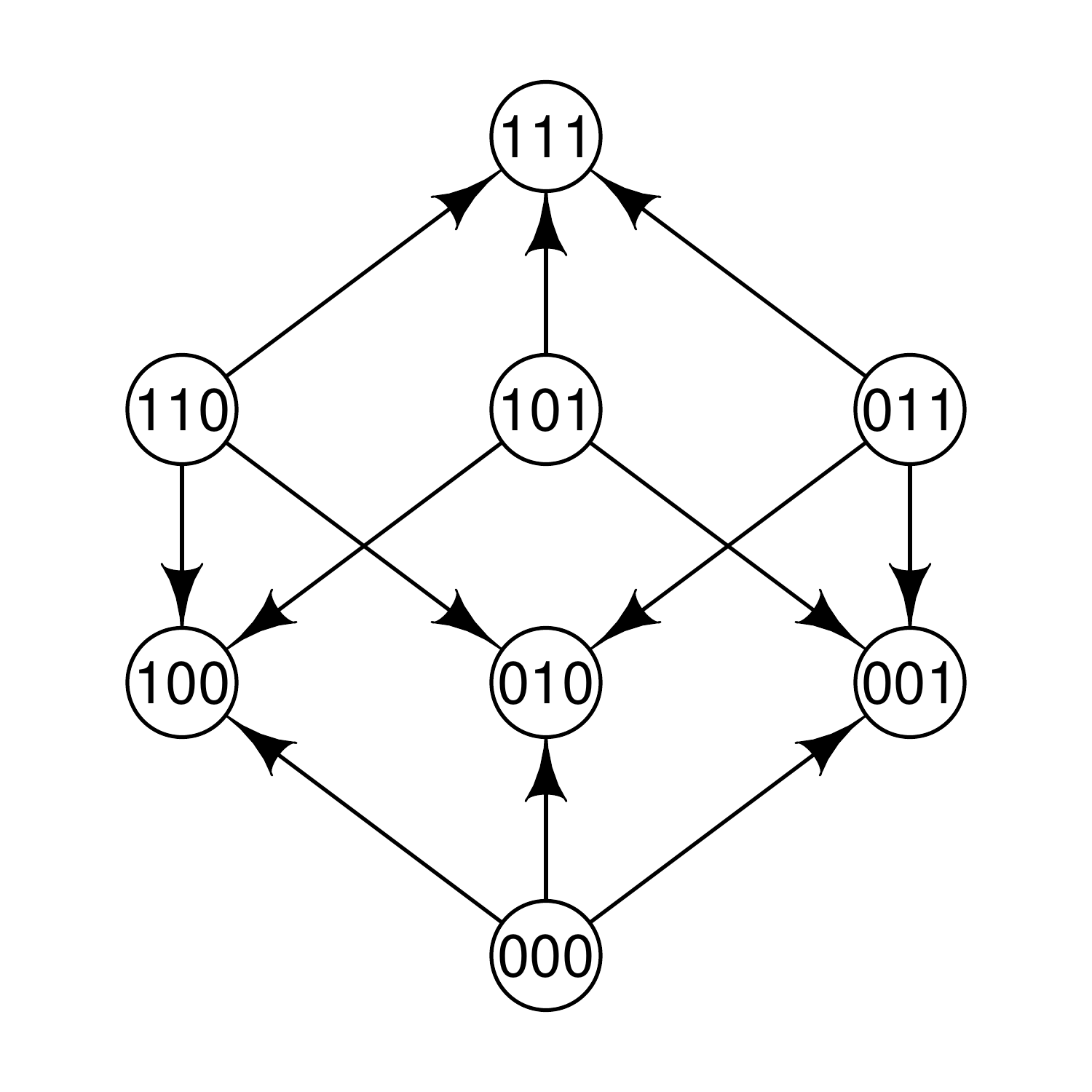}
\end{center}
\caption{A fitness graph for 3 loci. The fitness landscapes
satisfies A1-A4}
\end{figure}

\begin{figure}
\begin{center}
\includegraphics[scale=0.7]{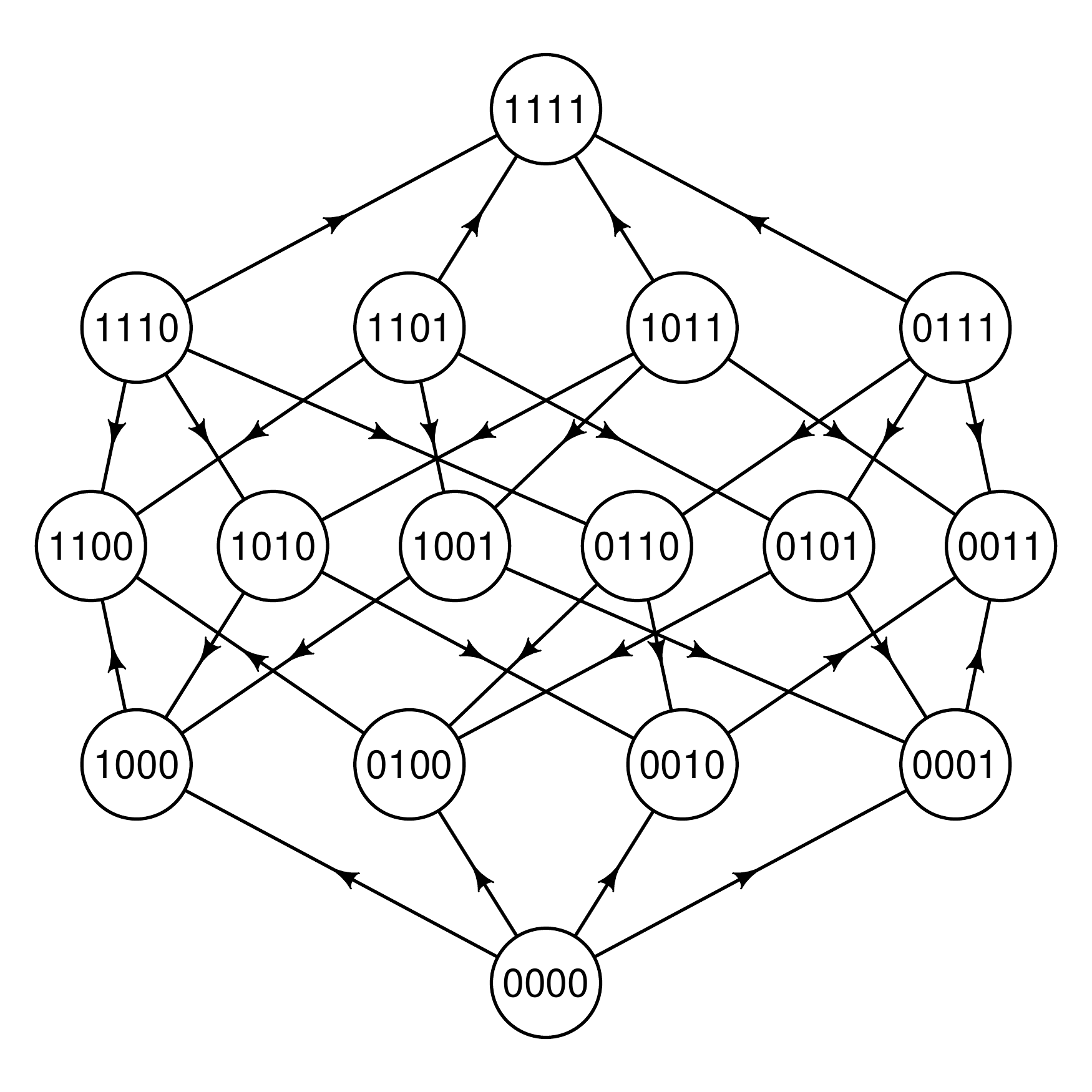}
\end{center}
\caption{A fitness graph for 4 loci. The fitness landscapes
satisfies A1-A4.}
\end{figure}

The double peaked
2-loci case (Figure 1)
and Example 2 (Figure 3) have some
similarities.
Indeed, in both cases there are obstacles
for adaption from the wild-type
to the global maximum.
In  the 2-loci case,
the frequency of 10 and 01 
are expected to be very low,
so that recombination is not
a powerful generator of
the optimal genotype 11. 
In contrast, in Example 2 
both 1100 and 0011 
are peak genotypes,
and recombination may
be a powerful generator
of the optimal genotype 1111.
Informally, one needs to combine "rare and rare"
in the 2 loci case, and "abundant  and abundant"
in Example 2.  

The potential advantage
of recombination 
for an Example 2 population
should not be underestimated.
For instance, consider a relatively
small subdivided population.
Then each subpopulation would be likely to
end up at 1100 or 0011.
Recombination could generate
1111, as soon as there is some
migration between 
the subpopulations.
However, in absence of recombination
the expected time before
1111 appears would be very long.
Indeed, a double mutation
or some other rare scenario
would be necessary. 
A relatively small population,
may fail to produce 1111 genotypes
all together.

One can ask how
frequent fitness landscapes
of the type described in Example 2
are. The TEM-family
of  $\beta$-lactamases
provide an
 interesting example.
 TEM-1 is the wild-type in
 the system, and approximately 200
 mutants have been found clinically, 
 for a record see
 
 http://www.lahey.org/Studies/temtable.asp.

Consider TEM-1, the 4-tuple mutant TEM-50 and 
intermediates. The clinically
found subset of these
16 genotypes are compatible
with the Example 2 fitness graph.
More precisely,
the clinically found alleles
can be represented as:
\[
0000, 1000, 0100, 0010, 0001, 1100, 0011, 1111.
\]
In particular, none of the triple mutants have
been found clinically.
It seems reasonable to interpret the absence of mutants
as an indication of low fitness in a natural setting.
If that interpretation is correct, the set of clinically found mutant is
compatible with the Example 2 fitness graph.

Recombination will 
generate the optimal
genotype for Example 2 
landscapes in many cases.
Other  fitness
landscapes satisfying A1-A4,
may not be quite as favorable.
However, the basic mechanism
is similar.

For instance, consider
Example 1. From 
the three peak genotypes
100, 010, 001
recombination could produce
some  of the intermediates 110, 101, 011,
and in the next step 111. 
For instance, 100 and 010
could produce 110, and then
110 and 001 could produce 111.
Notice that this scenario 
requires that $w_{110}>0$. 
In addition, the population structure would have
to allow for different genotypes to
recombine.

In summary, recombination 
has the potential to generate the optimal
genotypes for A1-A4 landscapes.
As we have seen,
some restrictions on
intermediate genotypes,
and population structure 
and size
may be necessary.
In particular, suboptimal
peak genotypes should have
a chance to recombine.
Notice also, that non-recombining
populations will get
trapped only if genetic
diversity is somewhat
restricted, so that double
mutations are rare.
We restrict to
 A1-A4 landscapes
satisfying the
following conditions.

\noindent
{\bf{Additional conditions:}}
\begin{itemize}
\item
$w_g \neq 0$
for critical
intermediate
genotypes,
\item
the potential generic diversity
is not extreme (double
mutations should still be rare).
\item
The $g^k$ elements
are likely to co-exist, encounter 
each other and recombine
during some stage of
adaptation. 
\end{itemize}
The last condition imposes
constrains on the fitness
landscape as well as the
population structure.
For Examples 1 and 2,
as well as closely
related examples,
the requirement on the
population structure
is modest. 
(Strictly speaking
we do not need A4,
if the last condition is
satisfied in any case.)

We have so far
discussed how recombination
can generate optimal genotypes.
However, even if the optimal
genotype appears in
a population, it is far from
obvious that the genotype 
will go to fixation. If
optimal genotypes recombine poorly,
 they may stay rare in the population.

\medskip
\noindent
{\bf{Main result.}}
Consider
populations which
satisfy A1--A4, as well as the three 
additional assumptions
summarized above.
We argue
that recombination $r>0$ will speed up 
adaptation provided that
$r$ is sufficiently small.

Indeed, in the absence of recombination
a population will  be trapped
at a suboptimal peak for a very
long time.
Rare events, such as double mutations, 
will be necessary for escapes.
However, recombination will
generate the optimal 
genotype within a relatively
short time interval
under our assumptions.
As soon as 
the optimal genotype appears,
Theorem 1 [below] shows
that the proportion is
expected to grow provided
that $r$ is sufficiently small.

\begin{theorem}
For a fitness landscape $w: \Sigma^L \mapsto \R$,
let $g_{\max}$ be the genotype of maximal 
fitness $w_{\max}$, and let $w_{\max'}$
denote the second to highest fitness.
Consider a population with 
genotypes in $\Sigma^L$
and recombination rate $r$.
If $g_{\max}$ is present in a population.
then its proportion
is expected to increase
provided that
\[
r<\frac{ w_{\max}-w_{\max'}} { w_{\max}+w_{\max'}}.
\]
\end{theorem}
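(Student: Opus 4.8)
The plan is to track the change in the proportion of $g_{\max}$ over one generation and show it is positive under the stated bound on $r$. Let $p$ denote the current frequency of $g_{\max}$ in the population, and let $\bar{w}$ denote the mean population fitness. The standard selection-recombination recursion decomposes the new frequency $p'$ into a selection term and a recombination term. Selection alone changes the frequency by the factor $w_{\max}/\bar{w}$, which is favorable since $w_{\max}$ is the maximal fitness. Recombination, however, acts to break up the optimal genotype: when a $g_{\max}$ individual recombines with a genotype of lower fitness, the offspring is typically not $g_{\max}$. The net recursion has the schematic form
\begin{equation*}
p' \;=\; \frac{w_{\max}}{\bar{w}}\,p \;-\; r\,D,
\end{equation*}
where $D$ is a linkage-disequilibrium-type loss term measuring the rate at which recombination dissociates $g_{\max}$.

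First I would make the worst-case bound on the recombination loss explicit. The term $D$ is at most proportional to the probability that a $g_{\max}$ individual recombines and fails to reproduce $g_{\max}$; the crudest bound replaces every recombination partner's contribution by the most optimistic breakup, giving $D \le p\,(w_{\max}/\bar w)$ or a comparable expression linear in $p$. I would then factor out $p$ and reduce the inequality $p' > p$ (equivalently $p'/p > 1$) to a condition involving only $w_{\max}$, the fitnesses appearing in $\bar{w}$, and $r$. The key simplification is that to show the proportion \emph{increases} it suffices to treat the least favorable case, in which every other genotype present has the second-highest fitness $w_{\max'}$; any lower fitness of the competitors only helps. Under this substitution $\bar{w}$ is bounded above by a convex combination of $w_{\max}$ and $w_{\max'}$, and the recombination loss is controlled by $w_{\max'}/\bar w \le w_{\max'}/w_{\max'} $-type estimates.

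With the worst case fixed, the inequality $p' > p$ collapses to requiring that the selective gain exceeds the recombinational loss, and I would show this reduces precisely to
\begin{equation*}
r \;<\; \frac{w_{\max}-w_{\max'}}{w_{\max}+w_{\max'}}.
\end{equation*}
Intuitively, $w_{\max}-w_{\max'}$ is the selective advantage driving $g_{\max}$ up, while $w_{\max}+w_{\max'}$ normalizes against the mean fitness that both selection and recombination are measured relative to; the ratio is exactly the recombination rate at which these two forces balance.

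The main obstacle I anticipate is controlling the recombination loss term $D$ rigorously rather than schematically. Recombination between $g_{\max}$ and an arbitrary partner can, in principle, also \emph{regenerate} $g_{\max}$ from two non-optimal parents, so a naive bound that only counts dissociations is pessimistic but must be justified as a genuine lower bound on $p'$. I would handle this by discarding the nonnegative regeneration contributions (which only help the conclusion) and bounding the loss by the event that a $g_{\max}$ individual pairs with any non-$g_{\max}$ individual and produces a non-optimal offspring. Making the per-locus recombination model precise enough to extract the clean factor $w_{\max}+w_{\max'}$ in the denominator — as opposed to a bound that depends on $L$ or on the full fitness distribution — is the delicate step, and I expect it to rely on the assumption that the relevant competitor pool is dominated, fitness-wise, by a single second-best value $w_{\max'}$.
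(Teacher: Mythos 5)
There is a genuine gap, and it sits exactly where you flagged ``the delicate step.'' Your recursion $p' = (w_{\max}/\bar w)\,p - rD$ routes every estimate through the mean fitness $\bar w$, which depends on $p$, and the resulting certificate cannot collapse to the stated $p$-free threshold. Concretely, with your crude bound $D \le p\,w_{\max}/\bar w$ you get $p' \ge p\,w_{\max}(1-r)/\bar w$, so your sufficient condition is $w_{\max}(1-r) > \bar w$; but as $p \to 1$ the population mean $\bar w$ tends to $w_{\max}$ (your ``convex combination of $w_{\max}$ and $w_{\max'}$'' is not bounded away from $w_{\max}$), and this condition fails for every $r>0$. So neither the crude loss bound nor the worst-case substitution of $w_{\max'}$ for the competitors yields the theorem uniformly in $p$, and your intuition that the denominator $w_{\max}+w_{\max'}$ comes from ``normalizing against the mean fitness'' is not the actual mechanism.

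The paper avoids $\bar w$ entirely by comparing per-capita multiplicative growth factors of the two classes. Writing the updated proportions as $p_1 = C_1\tilde p_1/(C_1\tilde p_1 + C_2\tilde p_2)$ and $p_2 = C_2\tilde p_2/(C_1\tilde p_1 + C_2\tilde p_2)$, the inequality $p_1 > \tilde p_1$ is \emph{equivalent} to $C_1 > C_2$, with $p$ eliminated by construction. The lower bound $C_1 \ge w_{\max}(1-r)$ comes from the non-recombining $g_{\max}$ individuals, discarding regeneration exactly as you propose. The idea missing from your sketch is the other side of the ledger: recombination does not merely remove mass from the $g_{\max}$ class, it \emph{deposits} the broken-up offspring into the complement, so the competitor class is inflated by recombination. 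The paper's Lemma 1 makes this precise: only mixed pairs $\{g_{\max}, g\}$ with $g \neq g_{\max}$ can add members to the complement, at most one net per pairing, so the complement grows by at most a factor $(1+r)$, whence $C_2 \le w_{\max'}(1+r)$. The threshold then falls out of $w_{\max}(1-r) > w_{\max'}(1+r)$; the denominator $w_{\max}+w_{\max'}$ arises from the paired $(1-r)$ and $(1+r)$ factors, not from $\bar w$, and no assumption about the competitor pool being dominated by a single second-best fitness value is needed beyond the trivial bound by $w_{\max'}$. Your subtractive-loss decomposition omits this gain term to the competitor class: even repaired, it would compare $w_{\max}(1-r)$ against $w_{\max'}$ alone, suggesting the threshold $r < (w_{\max}-w_{\max'})/w_{\max}$, which the bookkeeping does not support once the deposited offspring are counted.
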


The proof of Theorem 1 depends on the following lemma.

\begin{lemma}
For a fitness landscape $w: \Sigma^L \mapsto \R$,
let $g_{\max}$ be the genotype of maximal 
and, and let $r$ denote the
recombination rate. 
Then
$
 | \Sigma^L \setminus \{ g_{\max} \} |
$
will not increase more
than by a factor
$(1+r)$ as a result of recombination.
\end{lemma}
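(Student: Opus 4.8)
The plan is to first pin down what the quantity in the statement actually means, since as written it is a constant. The set $\Sigma^L \setminus \{g_{\max}\}$ has fixed cardinality $2^L-1$, so $|\Sigma^L \setminus \{g_{\max}\}|$ must be read as the total proportion (summed frequency) of all genotypes other than $g_{\max}$ in the population. Writing $p$ for the frequency of $g_{\max}$ and $q=1-p$ for this total proportion, and letting primes denote post-recombination values, the claim $q' \le (1+r)q$ is equivalent to the lower bound $p' \ge 1-(1+r)q = p - rq$ on the frequency of the optimal genotype. So I would reduce the lemma to showing that recombination cannot decrease the frequency of $g_{\max}$ by more than $rq$.

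Next I would write down the recombination step explicitly in the model of the paper. To form an offspring, with probability $1-r$ it is a clone of a single randomly chosen parent, and with probability $r$ two parents are drawn and each locus of the offspring is taken from a uniformly chosen parent. The post-recombination frequency of $g_{\max}$ is therefore $p' = (1-r)p + r\rho$, where $\rho$ is the probability that a recombinant of two independently sampled parents equals $g_{\max}$.

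The key step is to lower-bound $\rho$. The decisive observation is that the recombinant of two copies of $g_{\max}$ is always $g_{\max}$, whichever parent supplies each locus. Hence the event that both sampled parents are $g_{\max}$ already forces the recombinant to be $g_{\max}$, giving $\rho \ge p^2$; any recombinants produced from non-optimal parents only add to $\rho$ and can be discarded for a lower bound. This yields
\[
p' \ge (1-r)p + rp^2 = p\bigl(1 - r(1-p)\bigr) = p(1 - rq).
\]

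Finally I would translate this back to $q$. From $p' \ge p(1-rq)$ we get $q' = 1 - p' \le 1 - p + prq = q(1+rp)$, and since $p \le 1$ this is at most $q(1+r)$, which is the lemma. The only real subtlety, and the step I would guard most carefully, is the asymmetry between creation and destruction of $g_{\max}$: recombining $g_{\max}$ with a non-optimal genotype can destroy it (this is exactly what drives the growth of $q$), while recombination among non-optimal genotypes can create it. The self-recombination identity for $g_{\max}$ is precisely what lets me bound the net loss cleanly, without having to track the full recombination tensor over all pairs of genotypes.
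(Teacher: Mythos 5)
Your proof is correct, and it reaches the paper's bound by a genuinely different route, although both arguments turn on the same closure property: a recombinant of two copies of $g_{\max}$ is $g_{\max}$ regardless of which parent supplies each locus. The paper argues at the level of mating pairs: it classifies recombining pairs into three types ($\{g_{\max},g_{\max}\}$, $\{g_{\max},g\}$ with $g \neq g_{\max}$, and $\{g,g'\}$ with $g,g' \neq g_{\max}$), observes that only the mixed type can produce a net new non-optimal individual (at most one per pair, since the pair goes from one to at most two non-optimal members), and takes the worst case in which every recombining pair is mixed, which gives the factor $(1+r)$. You instead write the exact one-step recursion $p' = (1-r)p + r\rho$ for the frequency of $g_{\max}$, keep only the guaranteed contribution $\rho \ge p^2$ from sampling two optimal parents, and translate back to $q = 1-p$. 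Your route buys three things: it resolves the notational abuse in the statement (as written, $|\Sigma^L \setminus \{g_{\max}\}|$ is the constant $2^L-1$; your reading of it as the non-optimal mass is the right one, and is consistent with how Theorem 1 uses the lemma via $C_2 \le w_{\max'}(1+r)$); it avoids committing to the pairwise-mating bookkeeping, which the paper leaves informal; and it yields the slightly sharper bound $q' \le (1+rp)q \le (1+r)q$. What the paper's case analysis buys in exchange is a transparent combinatorial picture of why the worst case is "optimal mates with non-optimal"; your worst-case treatment of all recombinants not arising from two optimal parents as non-optimal is the frequency-level shadow of the paper's second and third cases combined, and nothing is lost there because non-optimal-by-non-optimal recombination cannot increase the non-optimal count anyway.
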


\begin{proof}
Recombining pairs of genotypes are of three types:
\[
\{ g_{\max},  g_{\max} \}, \quad
\{g_{\max},  g : g\neq   g_{\max} \}, \quad
\{ g, g'  : g, g' \neq g_{\max} \}.
\]
In the first case, recombination has
no effect.
In the third case, 
the number of elements in
$
 | \Sigma^L \setminus \{ g_{\max} \} |
$
will not increase.
In the second case,
recombination will results in
at most 2 elements in 
$\Sigma^L \setminus \{g_{\max} \} $.
In summary,
only  the second
case may lead to an
increase of $|\Sigma^L \setminus \{g_{\max} \} |$,
and the maximal net effect is one
1 more element in 
$\Sigma^L \setminus \{g_{\max} \}$.

An increase of
$
|\Sigma^L \setminus \{g_{\max} \}|
$
cannot exceed the case where
all recombining pairs are of the second type,
which translates to an increase by a factor
$(1+r)$.
\end{proof}

\noindent
{\emph{Proof of Theorem 1.}} 
\begin{proof}
The proportions $p_1$ of $g_{\max}$ genotypes,
and $p_2$ of remaining genotypes
after reproduction
can be expressed as
\[
p_1=\frac{C_1  \tilde{p_1}} {  C_1 \tilde{p_1}+ C_2 \tilde{p_2} } , \quad
p_2=\frac{C_2  \tilde{p_2}} {  C_1 \tilde{p_1}+ C_2 \tilde{p_2} },
\]
where $\tilde{p_1}, \tilde{p_2}$ are the former proportions.
From considering $g_{\max}$ individuals which do not
recombine, one concludes that
\[
C_1
\geq 
w_{\max} (1-r)  
\]
Consider $\Sigma^L \setminus \{ g_{\max} \}$.
In order to obtain an upper bound of $C_2$
we need to consider fitness as well as the effect
of recombination.
In total,
\[
C_2
\leq 
w_{\max'}  \, (1+r),
\]
by Lemma 1.
It follows that  the $g_{\max}$ proportion
increases if
\[
w_{\max} \,  (1-r) \,
>  \,
w_{\max'}   \, (1+r) ,
\]
or if
\[
r< \frac{w_{\max}-w_{max'}}{w_{\max}+w_{\max'}}
\]
\end{proof}

\bigskip
One can ask how important it
is that $r$ is small.
We performed
simulation of Example 1
fitness landscapes and large
populations,
using the programming
language R. 
According to our simulations,
the population quickly
goes to fixation at the optimal
genotype for
a wide range of values $r$.
However, the optimal genotype
does not go to fixation
if $r$ is close to 1.
Intuitively, this should make
sense, since $111$ recombines 
poorly with all mutational neighbors 
$110,101,011$. 
The exact threshold for $r$
depends on the choice
of parameters. 
However, the
general pattern is clear.
Recombination is a powerful
mechanism for escaping peaks,
as long as $r$ is sufficiently
small.

One of the most favorable situations 
for A1-A4 populations is probably
subdivided populations with regular
mixing of subpopulations. 
For subdivided populations, chances are good
that all the necessary peak genotypes
are available in the global population.
The fact that recombination
sometimes is especially advantageous for
subdivided populations is well known  \citep[e.g][]{ol}.
One favorable case is the
puddle and flood model \citep{c},
where long periods of isolated adaptation 
for the subpopulations
are alternated by brief periods
of population mixture.

\section{Discussion.}
We have demonstrated
an extreme advantage
of recombination for
a category  of  complex fitness landscapes.
We refer to the mechanism
as "peak jumping", since shuffling  of
peak genotypes generate a new peak
(the jump is from peaks to peak, rather than
from valleys to peak).

The peak jumping effect
should not be confused by
the Fisher-Muller effect mentioned
in the introduction.
The Fisher-Muller effect,
depends on clonal interference,
but the advantage of peak
jumping is different.
For instance, suppose that
some peak genotypes
(such as 1100 and 0011 in Example 2)
have equal fitness, and co-exist
in similar proportions in the population. 
Then clonal interference
is not an issue. 
Indeed, the genetic variation
would be maintained in the population
also in the absence of recombination.
However, because of potential 
peak jumps, 
 (such as the move from 1100 and 0011 to
 1111 in Example 2) 
 recombination may be advantageous.

The peak jumping effect
typically require that
the recombination rate
is relatively small,
and we have provided
a sufficient condition
on the rate.
Our observation that
rare recombination  works
better than frequent 
recombination, agrees
to some extent
with other studies
of complex fitness landscapes 
\citep{nns, me, dpk}.
Infrequent
recombination can be
a powerful mechanism
for escaping peaks
also in cases when 
frequent  recombination 
is not advantageous.

Whether the peak jumping
effect is important or not
is an empirical question.
As mentioned, 
the TEM-family
of  $\beta$-lactamases
provides some empirical
support.
In general,
empirical fitness
landscapes 
are many times
complex with 
multiple peaks
\citep[e.g.][]{h,dk, kk,ssf}.
Notice also that
there are theoretical
arguments why
an adapting 
organism 
would tend to show
more complex
gene interactions
over time \citep{gc,dp}.

A few recent studies
analyze the effect of
recombination
on complex 
fitness landscapes, \citep[e.g.][]{nns, me, mkb, dpk}.
The results point in slightly different directions,
depending on assumptions and
how the problem is phrased. 
Recombination is 
sometimes described as 
a disadvantage. 
However, we  argue that extreme effects
of recombination, such 
as peak jumping,
may be more important
than how the majority falls,
i.e. if recombination more
frequently accelerate
or decelerate adaptation.
The peak jumping effect
observed has no correspondence
in the two-loci case, or for smooth
landscapes. It would be interesting
to further explore effects of recombination
specific for complex fitness landscapes.

Along with results on higher order
epistasis \citep{bps, bpse}, see also \citet{wlw},
our results 
suggest that fitness landscapes
need to be studied in their
full complexity. Arguments based
on pairwise gene interactions or
average curvature may be misleading.

Recombination is wide spread in nature, but
remains poorly understood. 
One of the most important challenges
in the field in our view, is to better understand
the relation between higher order epistasis
and the effect of recombination.

\newpage

\end{document}